\newtheorem{mydef}{Definition}
\newtheorem{myprop}{Proposition}
\newtheorem{corollary}{Corollary}
\titleformat*{\section}{\large\bfseries}
\renewcommand*{\@fnsymbol}[1]{\ifcase#1\or* \else\@arabic{\numexpr#1-1\relax}\fi}
\begin{document}

\title{Assortment and the evolution of cooperation in a Moran process with exponential fitness\footnote{We are grateful to Martin Nowak for helpful comments.}}

\author{Daniel Cooney\thanks{Program in Applied and Computational Mathematics, Princeton University, New Jersey, U.S.A.} \and Benjamin Allen\thanks{Department of Mathematics, Emmanuel College, Massachusetts, U.S.A.} $^{, 4}$ \and Carl Veller\thanks{Department of Organismic and Evolutionary Biology, Harvard University, Massachusetts, U.S.A.} $^,$\thanks{Program for Evolutionary Dynamics, Harvard University, Massachusetts, U.S.A.} $^,$\thanks{carlveller@fas.harvard.edu}}\label{firstpage}

\date{}

\maketitle

\noindent \textbf{Abstract:} We study the evolution of cooperation in a finite population interacting according to a simple model of like-with-like assortment. Evolution proceeds as a Moran process, and payoffs from the underlying cooperator-defector game are translated to positive fitnesses by an exponential transformation. These evolutionary dynamics can arise, for example, in a nest-structured population with rare migration. The use of the exponential transformation, rather than the usual linear one, is appropriate when interactions have multiplicative fitness effects, and allows for a tractable characterization of the effect of assortment on the evolution of cooperation. We define two senses in which a greater degree of assortment can favour the evolution of cooperation, the first stronger than the second: (i) greater assortment increases, at all population states, the probability that the number of cooperators increases, relative to the probability that the number of defectors increases; and (ii) greater assortment increases the fixation probability of cooperation, relative to that of defection. We show that, by the stronger definition, greater assortment favours the evolution of cooperation for a subset of cooperative dilemmas: prisoners' dilemmas, snowdrift games, stag-hunt games, and some prisoners' delight games. For other cooperative dilemmas, greater assortment favours cooperation by the weak definition, but not by the strong definition. Our results hold for any strength of selection.\\

\noindent \textbf{Keywords:} Assortment, relatedness, cooperation, stochastic evolution, Moran process

\newpage
\doublespacing


\section{Introduction}

Whether like-with-like assortment favours the evolution of cooperation is a critical question in social evolution and population biology more generally. Assortment can result if interactions are often between related individuals (e.g., if the population is `viscous'), or if individuals prefer to interact with those to whom they are similar (homophily). Any theoretical analysis of the relationship between assortment and cooperation requires the synthesis of two models: one for cooperative interactions, and one for population assortment.

The simplest models of cooperation are games between two players, where each player has available to it the same two strategies: cooperate ($C$) and defect ($D$). The paradigmatic example is the prisoners' dilemma \citep{rapoport1965, axelrod1984, nowak2006a}, but many other two-strategy games can be interpreted as cooperative dilemmas as well \citep{nowak2012}.

A simple model of assortment in two-strategy games was developed by \citet{grafen1979}. In his model, individuals can be of two types, corresponding to the two strategies in the underlying game. In a given time-step, an individual interacts with a same-type individual with probability $r$, and a random member of the population (possibly same-type, possibly not) with probability $1-r$. Higher values of $r$ correspond to greater degrees of assortment in the population.  This model has been extended in many ways \citep{eshel1982,bergstrom2003algebra,jansen2006altruism,van2012,alger2013homo,vancleve2014,allen2015}

Does increased assortment tend to favour cooperation? Of course, this question is not new, and has been approached from numerous angles. Most analyses have been in a deterministic, infinite-population setting. In such a setting, \citet[App.~A]{hamilton1971} determines, for several particular games of interest, whether selection is in favour of cooperators or defectors in a population with an interaction structure identical to that in \citet{grafen1979}. Extending this, \citet{eshel1982} analyze the effect of Grafen's assortment parameter on the conditions for evolutionary stability of cooperation and defection. They show that, for cooperative dilemmas broadly defined (two cooperators do better than two defectors, and a defector does better than a cooperator when faced with a cooperator), there are threshold assortment levels $0 < r^{**} < r^* < 1$ such that, if $r>r^*$, cooperation is the only ESS, but if $r<r^{**}$, defection is an ESS (and the only one in the case of a prisoners' dilemma). That is, sufficiently high relatedness renders cooperation evolutionarily stable, while sufficiently low relatedness renders defection evolutionarily stable. 

More recently, the effect of assortment on the evolution of cooperation has been studied in the context of finite populations subject to stochastic evolution. In this setting, a natural measure of the success of a strategy is the weight of its pure state (where the population is monomorphic for that strategy) in the stationary distribution of the evolutionary process \citep{fudenberg2006, allen2014, veller2016}. In a two-strategy game (such as the usual cooperative dilemmas) with symmetric, small mutation rates, the superiority of one strategy over the other corresponds to its having a higher fixation probability \citep{fudenberg2006, allen2014}. In a very general model of social evolution that allows for arbitrary dependence of payoffs on strategy frequencies and population structure of any sort, \citet{vancleve2015} derives, in the limit of weak selection, the conditions under which the fixation probability of cooperation exceeds that of defection. 

An alternative way to study whether assortment aids cooperation is to determine the effect on key parameters of the evolutionary process, positive or negative, of marginal changes in the assortment parameter. In the economics literature, this general approach is called `comparative statics': rather than trying to determine the \emph{exact} effect a change in one variable has on another, we instead try to determine just the \emph{direction} of the effect (its `comparative static'). Many empirical studies of social evolution test the broad `comparative statics' statement that increased relatedness should be associated with more cooperation, and much of social evolutionary theory is interpreted in this way \citep{frank2013}. It is therefore important that the comparative statics of social evolution be studied extensively from a theoretical perspective.

A step in this direction was recently provided by \citet{allen2015}, who studied the effect of assortment on the evolution of cooperation in a finite population subject to Wright-Fisher evolution. Interaction each period accords with Grafen's assortment process, with expected payoffs derived from an underlying two-strategy game. Expected payoffs are translated to fitnesses according to a linear mapping mediated by a selection strength parameter $w$, and the next period's population composition is chosen stochastically on the basis of these fitnesses. \citet{allen2015} derive analytical results for the `weak selection limit', $w \to 0$. In this limit, for most cooperative dilemmas, increasing relatedness favours cooperation in the sense that it increases the fixation probability of a single cooperator relative to that of a single defector.

Here, we make use of a relatively recent advance in the study of finite-population evolutionary game theory to give a more general characterization of the comparative statics of assortment in Grafen's model. We model evolution as a Moran process \citep{moran1958}, occurring in a finite population that interacts according to Grafen's assortment model. Expected payoffs translate to fitnesses according to an \emph{exponential} mapping mediated by a selection strength parameter $w$ \citep{traulsen2008}. With this setup, we can derive a more general result than \citet{allen2015}: for \emph{any} selection strength $w$, increasing $r$ increases the fixation probability of a cooperator relative to that of a defector for the same set of cooperative dilemmas to which their result applies. We also define a stronger sense in which increased $r$ favours cooperation, and show that this too holds for a natural class of cooperative dilemmas: prisoners' dilemmas, snowdrift games, stag-hunt games, and some prisoners' delight games. To distinguish this stronger definition, we provide an example of a cooperative dilemma where assortment favours cooperation by the weaker definition of \citet{allen2015}, but not by our stronger definition.

For most of this work, we follow in the footsteps of many others \citep{grafen1979, eshel1982,bergstrom2003algebra,jansen2006altruism,alger2013homo, vancleve2014,okasha2015hamilton} by treating $r$ as an abstract assortment parameter without specifying how this assortment arises mechanistically.  In Section \ref{sec:biologicalmodel}, however, we illustrate how our model could apply to a nest-structured population.  In this setting, the parameter $r$ quantifies the frequency of interaction among nest-mates.  

Our results offer insights into the effect of assortment on the evolution of cooperation, and highlight the usefulness of the exponential fitness function in evolutionary game theory.

\section{The general model}

Social interactions are between two individuals, with the payoffs from any one interaction defined by the payoff matrix
\[\left.
\begin{array}{c|cc}
  & C & D\\
  \hline
C & R & S\\
D & T & P
\end{array} \right.\]
Following \citet{nowak2012}, we say that the game is a `cooperative dilemma' if $R>P$ (two cooperators do better than two defectors) and at least one of the following conditions holds: (i) $T>R$ (it is better to defect if one's opponent cooperates), (ii) $P>S$ (it is better to defect if one's opponent defects), (iii) $T>S$ (if one interactant cooperates and the other defects, then the defector does better than the cooperator). The game is a prisoners' dilemma if $T>R>P>S$; it is a snowdrift game if $T>R>S>P$; it is a stag-hunt game if $R>T\geq P > S$. 

The population is of constant size $N$. Each member's type is either cooperator (plays $C$) or defector (plays $D$). Each period, each member of the population receives its expected payoff from the following interaction: with probability $r$, it interacts with an individual of the same type, and with probability $1-r$, it interacts with a random member of the population, with each member equally likely (including the individual itself---the possibility of self-interaction simplifies the analysis).

The population state at a given time is defined by the number of cooperators, $i$. If $1\leq i \leq N-1$, then the expected payoff to a cooperator and a defector are:
\begin{linenomath}
\begin{eqnarray}
\pi^C_i &=& rR + (1-r)\frac{iR + (N-i)S}{N},\nonumber\\
\pi^D_i &=& rP + (1-r)\frac{iT + (N-i)P}{N}.\label{payoffs}
\end{eqnarray}
\end{linenomath}
These expected payoffs in turn translate to (positive) fitnesses according to some monotonic transformation.

Evolution is modelled as a Moran process \citep{moran1958, nowak2006}. Each period, the fitnesses of the cooperators and defectors in the population are calculated as above. One member is chosen for reproduction; the probability of being chosen is proportional to fitness. One member of the population is chosen to die; each member is equally likely to be chosen. A new individual is then born, taking the place of the one chosen to die. The new individual is of the same type as its parent, the member chosen for reproduction (which can be the same as the individual that was chosen to die).

Thus, if there are $i$ cooperators this period in the population, with $1 \leq i \leq N-1$, then denoting by $p_{i,j}$ the probability that there will be $j$ cooperators next period, we have:
\begin{linenomath}
\begin{eqnarray}
p_{i,i+1} &=& \frac{N-i}{N}\cdot\frac{if^C_i}{if^C_i + (N-i)f^D_i},\nonumber\\
p_{i,i-1} &=& \frac{i}{N}\cdot\frac{(N-i)f^D_i}{if^C_i + (N-i)f^D_i},\nonumber\\
p_{i,i} &=& 1 - p_{i,i+1} - p_{i,i-1}\label{transitions}. 
\end{eqnarray}
\end{linenomath}
We term $p_{i,i+1}/p_{i,i-1}$ the `relative rate of increase of cooperation' at population state $i$. This leads to our first definition by which higher assortment can favour cooperation: if it increases the relative rate of increase of cooperation for all interior $i$.

\begin{mydef}
Assortment favours cooperation in the \emph{strong sense} if $\frac{\partial}{\partial r} \frac{p_{i,i+1}}{p_{i,i-1}} > 0$ for all $i = 1, \ldots, N-1$.
\end{mydef}

Notice that, given the Moran process transition probabilities in \eqref{transitions}, $p_{i,i+1}/p_{i,i-1} = f^C_i/f^D_i$. So the condition for assortment to favour cooperation in the strong sense simplifies to $\frac{\partial}{\partial r} \frac{f^C_i}{f^D_i} > 0$ for all $i = 1, \ldots, N-1$.

Now consider the probability that a single cooperator, presently in a population that is otherwise all defectors, takes over the population; i.e., its `fixation probability', $\rho^C$. For our frequency-dependent Moran process, this is given by \citep{nowak2006}:
\[\rho^C = \frac{1}{1+\sum_{j=1}^{N-1}\prod_{i=1}^{j} \frac{f^D_i}{f^C_i}}.\]
The fixation probability of a single $D$ in a population otherwise all-$C$ is
\[\rho^D = \frac{\prod_{i=1}^{N-1}\frac{f^D_i}{f^C_i}}{1+\sum_{j=1}^{N-1}\prod_{i=1}^{j} \frac{f^D_i}{f^C_i}}.\]
From these expressions, the `relative fixation probability of cooperation', $\rho^C/\rho^D$, simplifies to:
\[\frac{\rho^C}{\rho^D} = \prod_{i=1}^{N-1}\frac{f^C_i}{f^D_i}.\]
This leads to our second definition by which higher assortment can favour cooperation: if it increases the relative fixation probability of cooperation.
\begin{mydef}
Assortment favours cooperation in the \emph{weak sense} if $\frac{\partial}{\partial r} \frac{\rho^C}{\rho^D} > 0$.
\end{mydef}
The use of the terminology `weak' and `strong' for these definitions is justified by the fact that the strong definition implies the weak definition: if $\frac{\partial}{\partial r} \frac{f^C_i}{f^D_i} > 0$ for all $i = 1, \ldots, N-1$, then clearly $\frac{\partial}{\partial r} \prod_{i=1}^{N-1}\frac{f^C_i}{f^D_i} > 0$.

\section{The exponential fitness transformation}

Following \citet{blume1993} and \citet{traulsen2008}, we use an exponential transformation of expected payoffs to (positive) fitnesses:
\begin{linenomath}
\begin{eqnarray}
f^C_i = \exp{\left(w \pi^C_i\right)},\; f^D_i = \exp{\left(w \pi^D_i\right)},\label{exponential} 
\end{eqnarray}
\end{linenomath}
where $w$ is the strength of selection. We should note, as do \citet{traulsen2008}, that this choice of transformation is for the most part arbitrary. Any transformation should translate payoffs to fitnesses monotonically, and in such a way that the fitnesses are positive (since transition probabilities in the stochastic evolutionary process are proportional to them). In the common linear transformation, $f = 1+w\pi$, when some payoffs are negative, the selection strength needs to be sufficiently small to ensure that all fitnesses are positive. In contrast, the exponential transformation that we use guarantees that all fitnesses are positive, even if some payoffs in the underlying game are negative.

Moreover, the exponential transformation is more conducive to a `multiplicative fitness effects' interpretation. If an individual has a total of $K$ interactions, with each interaction $k$ having a small fitness consequence $\exp{\left(w\pi^k/K\right)}\approx 1+w\pi^k/K$, and if these fitness consequences combine multiplicatively, then the total fitness consequence is
\[f = \prod_{k=1}^{K}\exp{(w\pi^k/K)} = \exp{\left(w\frac{1}{K}\sum_{k=1}^K \pi^k\right)}.\]
If the $\pi^k$ are interpreted as payoffs from an underlying game, then the expression above can be read as $\exp{\left(w\pi\right)}$, where $\pi$ is the average payoff to the individual. Treating averages as expectations, this is the same as the expression in \eqref{exponential}.

Finally, we note that in the weak selection limit $w \to 0$, the exponential fitness transformation \eqref{exponential} coincides with the usual linear fitness transformation $f = 1+w\pi$, since Taylor expanding \eqref{exponential} around $w=0$ gives $\exp{\left(w \pi\right)} = 1 + w \pi + \mathcal{O}(w^2)$. Thus, models that use the linear fitness transformation and the weak selection limit, such as in \citet{allen2015} and \citet{nowak2004}, can be thought of as special cases of the exponential fitness formulation.

\section{Results for exponential fitness}

We now consider the effect of assortment on the evolution of cooperation when fitnesses are calculated according to \eqref{exponential}. The relative rate of increase of cooperation is calculated by successively applying \eqref{transitions} and \eqref{exponential}:
\begin{linenomath}
\begin{eqnarray}
p_{i,i+1}/p_{i,i-1} = f^C_i/f^D_i = \exp{\left(w\left[\pi_i^C - \pi_i^D\right]\right)},\label{rate}
\end{eqnarray}
\end{linenomath}
with $\pi_i^C$ and $\pi_i^D$ as specified in \eqref{payoffs}.

This allows us to analyze for which games assortment favours cooperation in the strong sense of Definition 1 (Fig.~1):
\begin{myprop}
Assortment favours cooperation in the strong sense if $R\geq S$ and $T\geq P$, with at least one inequality strict.
\end{myprop}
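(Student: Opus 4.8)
The plan is to exploit the closed form \eqref{rate} for the relative rate of increase. Since $w>0$ and the exponential is strictly positive and increasing, the chain rule gives
\[\frac{\partial}{\partial r}\frac{f^C_i}{f^D_i} = w\,\exp\left(w\left[\pi^C_i - \pi^D_i\right]\right)\frac{\partial}{\partial r}\left[\pi^C_i - \pi^D_i\right],\]
so the sign of the quantity in Definition 1 is exactly the sign of $\frac{\partial}{\partial r}[\pi^C_i - \pi^D_i]$. The whole problem thus reduces to a sign computation on the payoff difference, and the exponential plays no further role.

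Next I would form $\pi^C_i - \pi^D_i$ directly from \eqref{payoffs}. Writing it as the convex combination
\[\pi^C_i - \pi^D_i = r(R-P) + (1-r)\,\frac{i(R-T)+(N-i)(S-P)}{N},\]
I would differentiate in $r$ and simplify. The key step is the algebraic reorganization: after differentiating, the terms regroup into
\[\frac{\partial}{\partial r}\left[\pi^C_i - \pi^D_i\right] = \frac{(N-i)(R-S)+i(T-P)}{N}.\]
Verifying this identity is the one place where care is needed; it follows from the numerator $N(R-P) - i(R-T) - (N-i)(S-P)$ by splitting $NR$ and $NP$ into their interior and exterior parts ($NR=(N-i)R+iR$, $NP=(N-i)P+iP$) and collecting coefficients, upon which the mixed terms cancel and leave the clean weighted sum above.

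With the difference in this form the conclusion is immediate. For every interior state $1\le i\le N-1$ both weights $N-i$ and $i$ are strictly positive. If $R\ge S$ and $T\ge P$ then each summand is nonnegative, and if at least one inequality is strict then the corresponding summand is strictly positive (its weight being nonzero), so the whole expression is positive. Hence $\frac{\partial}{\partial r}[\pi^C_i - \pi^D_i]>0$ for all $i=1,\dots,N-1$, and by the chain-rule identity above $\frac{\partial}{\partial r}(f^C_i/f^D_i)>0$ for all such $i$, which is precisely the strong-sense condition of Definition 1.

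The main obstacle is purely the bookkeeping in the simplification step: the derivative initially mixes the constant part $R-P$ with the frequency-dependent part, and one must see that these combine into the clean form $\frac{1}{N}[(N-i)(R-S)+i(T-P)]$, which makes the sign analysis a one-line observation. Everything else—the reduction through the exponential and the final positivity check—is essentially immediate.
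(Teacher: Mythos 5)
Your proposal is correct and follows essentially the same route as the paper: reduce via the chain rule to the sign of $\frac{\partial}{\partial r}[\pi^C_i - \pi^D_i]$, show this derivative equals the weighted sum $\frac{1}{N}\left[(N-i)(R-S)+i(T-P)\right]$ (the paper writes the identical expression as $(1-x)(R-S)+x(T-P)$ with $x=i/N$), and conclude positivity on interior states from $R\geq S$, $T\geq P$ with one inequality strict.
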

\begin{proof}
We have from \eqref{rate} that
\[\frac{\partial}{\partial r}\frac{p_{i,i+1}}{p_{i,i-1}} = \exp{\left(w\left[\pi_i^C - \pi_i^D\right]\right)}w\frac{\partial}{\partial r}\left[\pi_i^C - \pi_i^D\right],\]
and since the exponential is positive, the sign of $\frac{\partial}{\partial r}\frac{p_{i,i+1}}{p_{i,i-1}}$ is the same as the sign of $\frac{\partial}{\partial r}\left[\pi_i^C - \pi_i^D\right]$. Writing $x := i/N$, so that $x\in(0,1)$, we have from \eqref{payoffs} that this quantity is given by
\begin{linenomath}
\begin{eqnarray}
\frac{\partial}{\partial r}\pi_i^C - \frac{\partial}{\partial r}\pi_i^D &=& [R - xR - (1-x)S] - [P - xT - (1-x)P]\nonumber\\
&=& (1-x)(R-S) + x(T-P).\label{strict_condition}
\end{eqnarray}
\end{linenomath}
If both $R\geq S$ and $T\geq P$, with at least one inequality strict, then \eqref{strict_condition} is strictly positive for all $x\in (0,1)$, and therefore for all $i$ such that $1\leq i \leq N-1$. So, therefore, is $\frac{\partial}{\partial r}\frac{p_{i,i+1}}{p_{i,i-1}}$.
\end{proof}

Notice that the condition $R\geq S$ and $T\geq P$ provides a natural stringent definition of a cooperative game: No matter what my strategy is, I would prefer that my opponent cooperate. This condition is satisfied, in particular, by prisoners' dilemmas ($T > R > P > S$), snowdrift games ($T > R > S > P$), and stag-hunt games ($R > T \geq P > S$). In fact, the only other cooperative dilemmas -- as defined by \citet{nowak2012} and in Section 2 -- that satisfy this condition are those where $R > T > S > P$; i.e., a subset of ``prisoners' delight'' games \citep{binmore2004, skyrms2014}.

We can also prove a partial converse to Proposition 1, which shows that Proposition 1 can be strengthened to an `if, and only if' statement provided the population size $N$ is sufficiently large.
\begin{myprop}
If assortment favours cooperation in the strong sense, then one of the following three sets of conditions is satisfied: 
\begin{itemize}
 \item[(2a)] $R \geq S$ and $T \geq P$, with at most one inequality strict.
 \item[(2b)] $R > S$, $T < P$, and $N < \frac{R-S+P-T}{P-T}$.
 \item[(2c)] $R < S$, $T > P$, and $N < \frac{S-R+T-P}{S-R}$.
 \end{itemize}
\end{myprop}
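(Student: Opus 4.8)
The plan is to prove the stated implication directly: assume that assortment favours cooperation in the strong sense, and deduce that one of (2a)--(2c) holds. The point of departure is the computation already carried out for Proposition~1. By \eqref{rate} and \eqref{strict_condition}, strong-sense favouring is equivalent to the requirement that
\[ g(i/N) > 0 \quad \text{for all } i = 1,\dots,N-1, \qquad \text{where}\quad g(x) := (1-x)(R-S) + x(T-P). \]
The key reduction is that $g$ is affine, hence monotone, in $x$; its minimum over the sampled points $\{1/N,\dots,(N-1)/N\}$ is therefore attained at one of the two extreme values $x = 1/N$ or $x = (N-1)/N$. Consequently the whole family of $N-1$ inequalities is equivalent to the two endpoint conditions
\[ \text{(C1)}\quad (N-1)(R-S) + (T-P) > 0, \qquad \text{(C2)}\quad (R-S) + (N-1)(T-P) > 0, \]
obtained by clearing the factor $1/N$ from $g(1/N)$ and $g((N-1)/N)$ respectively. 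The proposition then reduces to reading off what (C1) and (C2) jointly imply about the signs of $R-S$ and $T-P$.

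With this reduction I would split into the four sign-regions of the pair $(R-S,\,T-P)$. If $R \geq S$ and $T \geq P$, we are in case (2a); here (C1) forbids $R=S$ and $T=P$ holding simultaneously (that would force $g \equiv 0$), so at least one inequality is strict, as required. If $R > S$ and $T < P$, then (C2) is the binding constraint, and dividing it by $P-T > 0$ rearranges it into the threshold $N < (R-S+P-T)/(P-T)$ of case (2b). The region $R < S$, $T > P$ is symmetric: (C1) now binds and, on dividing by $S-R > 0$, yields the threshold of case (2c). Finally, in the remaining region $R \leq S$ and $T \leq P$, both terms of $g(1/N)$ are nonpositive, so (C1) fails; this region is incompatible with strong-sense favouring and hence need not appear in the conclusion.

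The step requiring the most care is the sign bookkeeping rather than the case split itself: one must identify which of (C1), (C2) is binding in each off-diagonal region and divide by the correctly-signed quantity ($P-T$ or $S-R$) so that the threshold inequalities in (2b) and (2c) point the right way. Conceptually, the essential feature---and the reason the converse is only \emph{partial}---is the discreteness of the state space. Strong-sense favouring asks only that $g$ be positive at $x = (N-1)/N$, strictly inside $(0,1)$, and not at $x = 1$; so a game with $R > S$ but $T < P$ can still qualify when $N$ is small enough that the sampled frequencies never reach the range where $g$ turns negative, and this is exactly what the thresholds in (2b), (2c) quantify. For $N$ large these thresholds cannot be met, cases (2b) and (2c) become vacuous, and only region (2a)---the hypothesis of Proposition~1---survives, delivering the advertised strengthening to an `if and only if' for sufficiently large populations.
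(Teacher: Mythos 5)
Your proof is correct and follows essentially the same route as the paper's Appendix A: a case split on the signs of $R-S$ and $T-P$, exploiting the fact that the payoff-difference derivative $(1-x)(R-S)+x(T-P)$ is affine (hence monotone) in $x$, so that positivity over all interior states reduces to endpoint evaluations at $i=1$ and $i=N-1$, which yield exactly the thresholds in (2b) and (2c); your only reorganization is to extract both endpoint conditions (C1), (C2) before splitting into cases, whereas the paper splits first and identifies the single binding endpoint in each case. Note also that your reading of (2a) as requiring \emph{at least} one strict inequality (rather than ``at most,'' as literally stated) is the version the paper's own proof actually establishes.
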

\begin{proof}
See Appendix A.
\end{proof}

\begin{figure}[t]
\includegraphics[width=7cm]{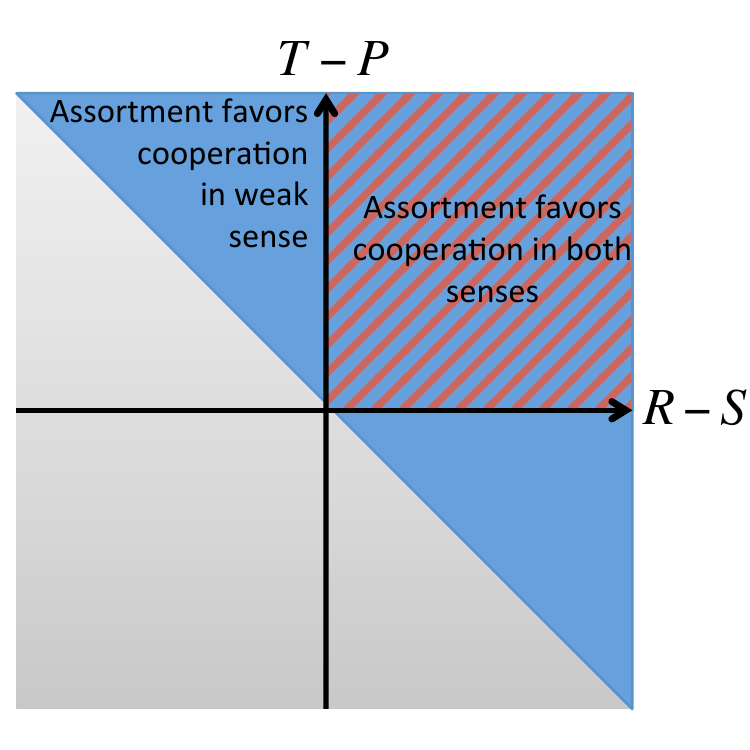}
\centering
\caption{A summary of our results for two-player games. Assortment favours cooperation in the \emph{strong sense} (increases the relative rate of increase of cooperators for all intermediate population states) if $R\geq S$ and $T\geq P$, with at least one inequality strict (blue and red cross-hatch). For large enough population sizes, this is a necessary condition as well. Assortment favours cooperation in the \emph{weak sense} (increases the fixation probability of a single cooperator, relative to a single defector) if, and only if, $R-S > P-T$ (blue).}
\end{figure}

We now determine how much broader the conditions are under which assortment favours cooperation in the weaker sense of Definition 2.

As noted by \citet{traulsen2008}, the ratio of fixation probabilities $\rho^C/\rho^D$ in the frequency-dependent Moran process simplifies significantly with the exponential fitness transformation \eqref{exponential}:
\begin{linenomath}
\begin{eqnarray}
\frac{\rho^C}{\rho^D} &=& \prod_{i=1}^{N-1} \frac{f^C_i}{f^D_i} = \prod_{i=1}^{N-1} \exp{\left(w\left[\pi_i^C - \pi_i^D\right]\right)},\nonumber\\
&=& \exp{\left(w\sum_{i=1}^{N-1}\left[\pi_i^C - \pi_i^D\right]\right)}.\label{fixation_ratio1}
\end{eqnarray}
\end{linenomath}
But in our case, we have from \eqref{payoffs} that for each $i$, $\pi_i^C + \pi_{N-i}^C = 2rR + (1-r)(R+S)$ and $\pi_i^D + \pi_{N-i}^D = 2rP + (1-r)(T+P)$, and if $N$ is even, then $\pi_{N/2}^C= rR + (1-r)(R+S)/2$ and $\pi_{N/2}^D = rP + (1-r)(T+P)/2$. Grouping payoff terms for $i$ and $N-i$ in \eqref{fixation_ratio1}, we see that, for both odd and even $N$,
\begin{linenomath}
\begin{eqnarray}
\frac{\rho^C}{\rho^D} &=& \exp{\left(w\frac{N-1}{2}[2rR + (1-r)(R+S) - 2rP - (1-r)(T+P)]\right)}\nonumber\\
&=& \exp{\left(w\frac{N-1}{2}[(R-P+S-T) + r(R-P-S+T)]\right)}.\label{fixation_ratio}
\end{eqnarray}
\end{linenomath}

This allows us to analyze for which games assortment favours cooperation in the weak sense of Definition 2 (Fig.~1).
\begin{myprop}
Assortment favours cooperation in the weak sense if, and only if, $R-S > P-T$.
\end{myprop}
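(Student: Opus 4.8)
The plan is to read the result directly off the closed-form expression for the relative fixation probability in \eqref{fixation_ratio}, which already collapses the product $\prod_{i=1}^{N-1} f^C_i/f^D_i$ into a single exponential whose exponent is \emph{linear} in $r$. The heavy lifting---pairing the $i$ and $N-i$ terms so that the sum $\sum_{i=1}^{N-1}\left[\pi_i^C - \pi_i^D\right]$ telescopes into the exponent of \eqref{fixation_ratio}---has already been carried out, so all that remains is a one-line differentiation and a sign check.

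First I would differentiate \eqref{fixation_ratio} with respect to $r$. Writing the exponent as $w\frac{N-1}{2}\left[(R-P+S-T) + r(R-P-S+T)\right]$, the chain rule gives
\[
\frac{\partial}{\partial r}\frac{\rho^C}{\rho^D} = \frac{\rho^C}{\rho^D}\cdot w\frac{N-1}{2}\left(R-P-S+T\right).
\]
Second, I would note that the prefactor $\rho^C/\rho^D = \exp(\cdots)$ is strictly positive and that $w\frac{N-1}{2} > 0$ (selection strength is positive and $N\geq 2$, since interior states $1\leq i\leq N-1$ must exist). Hence the sign of the derivative is exactly the sign of the $r$-independent constant $R-P-S+T = (R+T)-(P+S)$.

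Third, I would conclude the equivalence. Because $R-P-S+T$ does not depend on $r$, the derivative is strictly positive for every $r$ precisely when $R+T > P+S$, equivalently $R-S > P-T$; and when $R-S\leq P-T$ the derivative is non-positive, so assortment does not favour cooperation in the weak sense. This yields the `if and only if' with no case analysis. The only point demanding care is verifying that the coefficient of $r$ in \eqref{fixation_ratio} is indeed $R-P-S+T$ rather than its negative---a sign-tracking check, not a genuine obstacle. In this sense there is no hard step here: the difficulty was entirely absorbed into deriving the linear-in-$r$ exponent of \eqref{fixation_ratio}, and the proposition follows immediately.
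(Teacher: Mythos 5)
Your proposal is correct and takes essentially the same route as the paper: differentiate the closed-form expression \eqref{fixation_ratio} with respect to $r$, observe that the exponential prefactor and $w\frac{N-1}{2}$ are positive, and read the sign of the derivative off the $r$-independent constant $R-P-S+T$, giving the equivalence $R-S>P-T$ in both directions at once. The only (harmless) difference is that you spell out the positivity of $w\frac{N-1}{2}$ and the non-positive case explicitly, which the paper leaves implicit in its ``$\Leftrightarrow$'' statement.
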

\begin{proof}
From \eqref{fixation_ratio}, we have that 
\[\frac{\partial}{\partial r} \frac{\rho^C}{\rho^D} = \exp{\left(w\frac{N-1}{2}[(R-P+S-T) + r(R-P-S+T)]\right)}w\frac{N-1}{2}(R-P-S+T).\]
Since the exponential is positive, $\frac{\partial}{\partial r} \frac{\rho^C}{\rho^D} > 0 \; \Leftrightarrow \; R-P-S+T>0$.
\end{proof}

This condition corresponds to that found by \citet{allen2015} for assortment favouring cooperation in the same weak sense, but their result applies only in the weak selection limit ($w\to 0$), whereas our result holds for all selection strengths. 

A corollary of Proposition 3 follows from a result due to \citet{fudenberg2006}. Consider an evolving finite population of cooperators and defectors, where cooperators mutate to defectors and defectors to cooperators at the same rate. If this rate of mutation is very small (the `weak mutation limit'), the long-run distribution of the (ergodic) evolutionary process over all population states $i = 1, 2, \ldots, N$ is close to a distribution over just the two pure states all-$D$ ($i = 0$) and all-$C$ ($i = N$). This distribution is given by $\left(\frac{\rho^D}{\rho^D+\rho^C}, \frac{\rho^C}{\rho^D+\rho^C}\right)$, whose dependence on $\rho^C/\rho^D$ can explicitly be seen in the form $\left(\frac{1}{1+\rho^C/\rho^D}, 1-\frac{1}{1+\rho^C/\rho^D}\right)$. If greater assortment increases the ratio $\rho^C/\rho^D$ (i.e., if assortment favours cooperation in our weak sense), it also increases the weight on the cooperative state in this weak mutation limit.

\begin{corollary}
In the Moran process with symmetric mutation rates, in the weak-mutation limit, greater assortment increases the stationary distribution's weight on the cooperative state all-$C$ if, and only if, $R-P-S+T>0$.
\end{corollary}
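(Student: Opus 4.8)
The plan is to reduce the Corollary directly to Proposition 3, since the only substantive work is the cited result of \citet{fudenberg2006} connecting the weak-mutation stationary distribution to the relative fixation probability $\rho^C/\rho^D$. First I would invoke that result to assert that, in the weak-mutation limit of the ergodic Moran process with symmetric mutation rates, the stationary distribution concentrates on the two pure states $i=0$ (all-$D$) and $i=N$ (all-$C$), with weights proportional to $\rho^D$ and $\rho^C$ respectively; that is, the weight on all-$C$ is $\frac{\rho^C}{\rho^D + \rho^C}$. This is precisely the setup recalled in the paragraph preceding the Corollary, so I would treat it as a given.

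The second step is to express this weight as a monotone function of the single quantity $\rho^C/\rho^D$. Dividing numerator and denominator by $\rho^D$, the weight on all-$C$ becomes $\frac{\rho^C/\rho^D}{1 + \rho^C/\rho^D} = 1 - \frac{1}{1+\rho^C/\rho^D}$, exactly the form already displayed in the text. Writing $g(y) := \frac{y}{1+y}$ for $y = \rho^C/\rho^D > 0$, I would note that $g'(y) = \frac{1}{(1+y)^2} > 0$, so the weight on the cooperative state is a strictly increasing function of the relative fixation probability $\rho^C/\rho^D$. Consequently, $\frac{\partial}{\partial r}$ of the weight on all-$C$ has the same sign as $\frac{\partial}{\partial r}\frac{\rho^C}{\rho^D}$ by the chain rule.

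The final step is to apply Proposition 3, which states that $\frac{\partial}{\partial r}\frac{\rho^C}{\rho^D} > 0$ if and only if $R - P - S + T > 0$. Combining this with the monotonicity established above, greater assortment increases the stationary weight on all-$C$ if and only if $R - P - S + T > 0$, which is the claim. I would close by observing that the equivalence is genuinely an `if and only if' because $g$ is strictly monotone, so no sign information is lost in passing from $\rho^C/\rho^D$ to the stationary weight.

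I do not anticipate a serious obstacle here: the argument is essentially a one-line monotone-transformation observation layered on top of Proposition 3 and the Fudenberg--Imhof result. The only point requiring care is ensuring that the cited weak-mutation approximation is invoked correctly — namely that it is the \emph{ratio} $\rho^C/\rho^D$ (and not the individual fixation probabilities, which may themselves vary non-monotonically in $r$) that governs the limiting weights, so that the sign of the comparative static is controlled entirely by Proposition 3.
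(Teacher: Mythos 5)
Your proposal is correct and follows essentially the same route as the paper: the paragraph preceding the Corollary already expresses the weak-mutation stationary weight on all-$C$ as the strictly increasing function $1 - \frac{1}{1+\rho^C/\rho^D}$ of the ratio $\rho^C/\rho^D$, and then the Corollary follows directly from Proposition 3. Your only addition is making the monotonicity argument (via $g(y)=y/(1+y)$, $g'(y)>0$) and the resulting `if and only if' fully explicit, which the paper leaves implicit.
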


If the condition $R-P-S+T>0$ holds and mutations are very infrequent, then this corollary can be interpreted as follows: at a time point far in the future, a population with a high degree of assortment is more likely to be all cooperating than is a population with a low degree of assortment. Alternatively, among many evolving populations, cooperation will be more common among those with high assortment than those with low assortment.

We have demonstrated that, for a large class of cooperative dilemmas, assortment favours cooperation in our strong sense, and that for an even broader class, assortment favours cooperation in the weak sense. To demonstrate the utility of the distinction between our weak and strong definitions, we provide an example of a game in which assortment favours cooperation in the weak sense but not the strong sense. 

Suppose we have a population of size $N > 10$ interacting according to Grafen's assortment model, with payoffs specified by the game 
\[\left.
\begin{array}{c|cc}
  & C & D\\
  \hline
C & R = 10 & S = 1\\
D & T = 8  & P = 9
\end{array} \right..\]
This is a `cooperative dilemma' in the sense of \citet{nowak2012} because two cooperators get a higher payoff than two defectors ($R>P$), it is better to be the defector in a cooperator-defector pair ($T>S$), and it is better to defect if one's opponent is a defector ($P>S$). 

Here, assortment favours cooperation in the weak sense, because $R-P-S+T = 8 > 0$. But assortment does not favour cooperation in the strong sense, because none of the conditions (2a), (2b), and (2c) holds: (2a) and (2c) fail because $T<P$; (2b) fails because $\frac{R-S+P-T}{P-T} = 10 < N$.

\section{Realization of the model in a nest-structured population}\label{sec:biologicalmodel}

So far we have described the model in abstract terms.  Here we show how the model can be realized in a nest-structured population, with $r$ describing the relative frequency of interaction among nest-mates.  

Consider a population with $N$ nests, each containing a single adult.  Each time-step corresponds to a single (non-overlapping) generation.  Each generation, each adult produces a large number of juveniles. Each juvenile interacts with a large number of others, a proportion $r$ of these being with juveniles from the same nest, and the remaining proportion $1-r$ being with juveniles chosen uniformly from the population at large. These interactions determine a juvenile's fitness multiplicatively, so that fitness is an exponential function of the sum of individual payoffs. 

After interaction, a new head for each nest is chosen, forming the next generation of adults.  With high probability ($1-\mu$, where $\mu \ll 1$ represents the migration rate) the new head of a nest is chosen from among the juveniles who were born at this nest.  Otherwise (with probability $\mu$) the new head is chosen from the juvenile population at large.  In each case, the choice is made proportionally to fitness.  

In the limit $\mu \to 0$, the dynamics of the above model coincide with those of the simple model we have studied in this paper (up to a rescaling of time). The reason is that, as the migration rate becomes very small, the probability of there being two migrants in a given generation becomes negligible compared to that of there being only one migrant. Thus at most one nest changes its strategy per generation, consistent with the Moran process.  

\section{Games with more than two players}

In reality, many cooperative dilemmas are not pairwise interactions, but involve more than two players. Here, we shall demonstrate that our methodology can extend to a particularly simple $n$-player cooperative dilemma, a linear public goods game. We shall show that increased assortment favours the evolution of cooperation in this case.

Interactions occur in groups of size $n$, where each interactant is either a cooperator ($C$) or a defector ($D$). Assortment is defined as $r = \mathbb{P}(C|C) - \mathbb{P}(C|D)$, where $\mathbb{P}(C|D)$ is the probability that a randomly chosen interactant of a defector is a cooperator, and $\mathbb{P}(C|C)$ is the probability that a randomly chosen interactant of a cooperator is a cooperator \citep{van2009group}. (It is easily seen that $\mathbb{P}(C|C) - \mathbb{P}(C|D) = \mathbb{P}(D|D) - \mathbb{P}(D|C)$, so that this definition does not depend on the labelling of strategies.) This definition is consistent with that used in our two-player setup above.

Assortment derives from a `population structure', which determines, from the number of cooperators in the population, their distribution among the interacting groups. Formally, in a population of size $N$, a population structure $m$ (for `matching function') is a vector-valued function that specifies, for each possible number $i$ of cooperators in the population, a probability distribution $(m_{[0]}(i), m_{[1]}(i), \ldots, m_{[n]}(i))$ over the $n+1$ possible group compositions, so that $m_{[j]}(i)$ is the probability that, when there are $i$ cooperators in the whole population, a randomly chosen group contains exactly $j$ cooperators. (The subscript in $m_{[j]}(i)$ is bracketed to distinguish that this is the number of cooperators in the group, while we have throughout used an unbracketed subscript to indicate the number of cooperators in the whole population.)

One complication arises. In an infinite-population setting \citep{van2011replicator, jensen2014evolutionary, rigos2016assortativity}, where the proportion $x\in [0,1]$ of cooperators is specified, it is required that the population structure $f$, applied to the proportion of cooperators $x$, recapitulate that proportion: $x = \sum_{j=0}^n j m_{[j]}(x)$. In our finite-population case, where the law of large numbers does not apply, $N/n$ groups drawn independently from the distribution $m(i)$ need not contain a total of $i$ cooperators. To sidestep this complication, we shall make an assumption analogous to that of self-interaction in the two-player game setup: interactants in a group are drawn \emph{with replacement} from the population, according to the distribution $m(i)$. That is, we shall assume that an individual can take on multiple positions in the group, so that a group of $n$ need not contain $n$ distinct individuals.

We shall be interested in the population structures that maintain, for all intermediate $i = 1, 2, \ldots, N-1$, a constant assortment parameter $r$. One such population structure, and a natural extension of the population structure we analyzed for two-player games, is this: A cooperator's $n-1$ group-mates are, with probability $r$, all cooperators, and with probability $1-r$ are drawn randomly and independently from the population (i.e., they constitute a binomial sample of size $n-1$). When there are no cooperators in the population (respectively, no defectors), then all groups comprise only defectors (respectively, only cooperators). The exact population structure that leads to a constant $r$ will not matter in what follows, because the game we shall consider is linear. For general (nonlinear) $n$-player games, however, the population structure will matter.

The game played by the $n$ constituents of a group is a linear, or additive, public goods game. A cooperator produces, at individual payoff cost $c$ to itself, a public good that provides benefit $b$ to each member of the group. A defector enjoys the benefits of the public good produced by cooperators in the group, but does not produce any itself. The payoff to a cooperator in a group with $j$ cooperators is $\pi^C_{(j)} = jb - c$, while the payoff to a defector in a group with $j$ cooperators is $\pi^C_{(j)} = jb$. 

The expected payoff to a cooperator and a defector are, respectively:
\begin{linenomath}
\begin{eqnarray}
\pi_i^C &=& -c + b + (n-1)\mathbb{P}(C|C)b;\nonumber\\
\pi_i^D &=& (n-1)\mathbb{P}(C|D)b.\label{multiplayer}
\end{eqnarray}
\end{linenomath}
With a constant-$r$ assortment process and this game, when there are $i$ cooperators in the population, we have \citep{van2011replicator}:
\begin{equation*}
\pi_i^C - \pi_i^D = (n-1)[\mathbb{P}(C|C) - \mathbb{P}(C|D)]b - c + b = (n-1)rb - c +b.
\end{equation*}

As before, we calculate fitness exponentially from expected payoffs, and update the population according to a Moran process. Checking our strong condition by which assortment favours cooperation, Definition 1:
\begin{equation*}
p_{i,i+1}/p_{i,i-1} = f_i^C/f_i^D = \exp\left(w[\pi_i^C - \pi_i^D]\right) = \exp\left(w[(n-1)rb - c + b]\right),
\end{equation*}
so that
\begin{equation*}
\frac{\partial}{\partial r} \frac{p_{i,i+1}}{p_{i,i-1}} = w(n-1)b\exp\left(w[(n-1)rb - c + b]\right) > 0.
\end{equation*}
Therefore, assortment favours the evolution of cooperation in the strong sense here.

The above results apply to a linear public goods game.  For nonlinear public goods games, a single assortment parameter does not suffice to characterize the effects of population structure on natural selection; rather, a full probability distribution over all possible group compositions is required \citep{van2009group, ohtsuki2014evolutionary}.  Therefore, while nonlinear public goods games are important from both a theoretical \citep{archetti2012review} and empirical \citep{GoreSnowdrift} perspective, an analysis of how assortment affects cooperation in such games is beyond the scope of the current work.  

\section{Discussion}

We have studied the effect of assortment on the evolution of cooperation, making use of a simple model of like-with-like assortment in two-strategy games. We have demonstrated that, for many cooperative games of interest (including prisoners' dilemmas, snowdrift games, and stag-hunt games), increased assortment favours the evolution of cooperation in two senses: it increases the rate at which the cooperative strategy increases in the population, and it increases the fixation probability of cooperators relative to that of defectors.

Most studies of stochastic game theory -- for example, \citet{allen2015} in a setting similar to ours -- assume selection to be weak. This assumption is made because standard stochastic models are otherwise intractable, with analytical results impossible. Here, making use of an exponential transformation of payoffs to fitnesses instead of the usual linear one, we have demonstrated that informative analytical results may be obtained in the more general case allowing any strength of selection. 

Moreover, this exponential transformation is not simply a mathematical expediency. We have argued that it is the appropriate transformation of average payoff to fitness when fitness derives from many interactions with multiplicative fitness effects. Multiplicative fitness effects are a common assumption in population genetics \citep{ewens2004}, but not in evolutionary game theory, where additive fitness effects are standard. 

In addition to their biological import, our results highlight the usefulness of this exponential payoff-fitness transformation in the Moran process [also see \citet{traulsen2008}], and suggest that tractable results could be obtained by using this as a model to answer other questions in social evolution and evolutionary game theory more broadly.

\bibliographystyle{abbrvnat}
\bibliography{grafen_moran_cite}

\section*{Appendix A: Proof of Proposition 2}

We prove Proposition 2 by considering all possible cases for the signs of $R-S$ and $T-P$. 

\begin{itemize}

\item From Proposition 3, we know that if $P \geq T$ and $R \geq S$ (with at least one inequality strict), $\frac{\partial}{\partial r}\frac{p_{i,i+1}}{p_{i,i-1}} > 0$. 

\item If $R \leq S$ and $T \leq P$, both terms in \eqref{strict_condition} are non-positive, and so $\frac{\partial}{\partial r}\frac{p_{i,i+1}}{p_{i,i-1}} \leq 0$.

\item If $R < S$ and $T > P$, the term $R-S$ in \eqref{strict_condition} is negative, and the term $T-P$ is positive. The expression \eqref{strict_condition} is therefore increasing in $x:=i/N$, and therefore smallest when $i = 1$. So, in this case, $\frac{\partial}{\partial r}\frac{p_{i,i+1}}{p_{i,i-1}} > 0$ for each $i = 1, \ldots, N-1$ if, and only if, \eqref{strict_condition} is positive when $i=1$. When $i=1$, \eqref{strict_condition} equals $[(N-1)(R-S) + (T-P)]/N$, which is positive if, and only if, $N < (S-R + T-P)/(S-R)$.

\item If $R>S$ and $T<P$, the term $R-S$ in \eqref{strict_condition} is positive, and the term $T-P$ is negative. The expression \eqref{strict_condition} is therefore decreasing in $x:=i/N$, and therefore smallest when $i = (N-1)/N$. So, in this case, $\frac{\partial}{\partial r}\frac{p_{i,i+1}}{p_{i,i-1}} > 0$ for each $i = 1, \ldots, N-1$ if, and only if, \eqref{strict_condition} is positive when $i=(N-1)/N$. When $i=(N-1)/N$, \eqref{strict_condition} equals $[(R-S) + (N-1)(T-P)]/N$, which is positive if, and only if, $N < (R-S + P-T)/(P-T)$. 

\end{itemize}

\end{document}